\documentclass[11pt]{article}

\usepackage[margin=1in]{geometry}
\usepackage[T1]{fontenc}
\usepackage{amsmath,amssymb,amsthm}
\usepackage{booktabs,microtype,tabularx}
\usepackage{graphicx}
\usepackage{placeins}
\graphicspath{{./}}
\newcommand{\E}{\mathbb{E}}

\newcommand{\Q}{\mathbb{Q}}
\newcommand{\cD}{\mathcal{D}}
\newcommand{\cL}{\mathcal{L}}
\newcommand{\cR}{\mathcal{R}}
\newcommand{\dd}{\mathrm d}
\newcommand{\norm}[1]{\lVert#1\rVert}
\newcommand{\pos}[1]{[#1]_+}
\newtheorem{maintheorem}{Theorem}
\newtheorem{mainproposition}{Proposition}
\title{Amortizing the Calibration Triple: A Projection-Consistent Neural
Operator for Local-Stochastic Volatility}
\author{Xiaozhen Wang$^{1}$, Ana\"is Despr\'es$^{2,3}$, Martin Dureau$^{3}$,
Francois Buet-Golfouse$^{3,*}$\\[0.6em]
\small $^{1}$CEREMADE, Universit\'e Paris Dauphine-PSL\\
\small $^{2}$LaMME, Universit\'e \'Evry Paris-Saclay\\
\small $^{3}$AIML Global Markets, Barclays\\[0.4em]
\small $^{*}$Corresponding author: \texttt{francois.buetgolfouse@barclays.com}}
\date{}

\begin{document}

\maketitle

\begin{abstract}
Local-stochastic volatility (LSV) combines vanilla marginals with richer smile dynamics, but calibration requires a slow, noisy and sequential McKean--Vlasov fixed point. We learn a projection-consistent operator for the calibration triple. Given finite quotes and a stochastic-volatility (SV) backbone, it jointly returns an implied-volatility surface subject to static-arbitrage constraints, its Dupire local volatility, LSV leverage and the conditional moment required by the projection identity. Starting from option-price marginals, we derive a division-free Dupire residual in log-implied-variance coordinates and a quotient Fokker--Planck equation after Gy\"ongy projection. Deep Operator Network (DeepONet) and Fourier Neural Operator (FNO) implementations enforce quote fit, static-arbitrage, Dupire and projection constraints. For the witness-augmented residual system, we prove conditional identification and empirical consistency under LSV existence and inverse residual stability. In controlled synthetic tests, forward-start and cliquet errors differ from a particle method by 0.1 and 0.2 percentage points, while calibration latency falls from 98.5 to 0.6 ms. Compared with the tested baselines, local-volatility root-mean-square error (RMSE) falls by 36\% and leverage RMSE by 7--16\%. These results support amortizing the LSV fixed point: the expensive solve moves offline, while online calibration reduces to a single projection-consistent operator evaluation.
\end{abstract}

\section{Introduction}
LSV calibration is usually carried out in three steps. Starting from a finite
bid--ask quote book, a desk fits an arbitrage-admissible implied-volatility
(IV) surface, differentiates it to obtain Dupire local variance, and then
solves for LSV leverage. SSVI is one way to interpolate the quotes
\cite{gatheral2014}; Andreasen--Huge is a standard local-volatility convention
\cite{andreasen2011}; and particle methods solve the LSV projection when a
calibrated model exists \cite{guyon2012,guyon2013,reisinger2025}. The Dupire
step is sensitive to sparse quotes and the smoothing convention because it
differentiates the fitted surface. The leverage step is instead a
McKean--Vlasov fixed point: its conditional moment is computed under dynamics that already
contain the unknown leverage, which makes the calibration a McKean--Vlasov
problem. Particle estimates add Monte Carlo and kernel error, and the
solve proceeds sequentially in time. These numerical errors also pass from one
stage to the next: local variance inherits the quote-smoothing and
differentiation error, while leverage inherits both that field and the noisy
conditional-moment estimate. Because the stages are calibrated by separate
engines, their outputs need not form a projection-consistent triple. The whole
calculation must still be repeated for every surface, backbone and scenario.

We address both problems by learning a projection-consistent conditional
operator and moving the repeated solve offline. For quote book $Q$ and
stochastic-volatility (SV) backbone descriptor $\theta$, the operator returns
\begin{equation}
 \mathcal G^\star:(Q,\theta)\longmapsto
 (I,b=I^2,q=\log b,a_D,m_\theta,\ell_\theta),                 \label{eq:operator}
\end{equation}
where $a_D$ is Dupire local variance,
$m_\theta(t,x)=\E[V_\theta(Y_t)\mid X_t=x]$, and
$\ell_\theta^2m_\theta=a_D$. The operator returns the calibration triple
(IV, Dupire local volatility and LSV leverage) together with the conditional
moment linking its last two components. At inference, residual gates decide
whether the output is accepted, passed to a short particle polish, or rejected.

The output fields are tied together by the projection identities.
Option prices determine marginal laws; Dupire identifies their
one-dimensional generator; the selected backbone lifts a marginal into a
joint-law fibre; and Markovian projection selects leverage within that fibre
\cite{gyongy1986}. Substituting the projection quotient into the joint
Fokker--Planck equation exposes the nonlinear fixed point to the learning
problem. We use the same identities to couple the output heads during training
and to audit their predictions after inference.

\begin{figure}[htbp]
\centering
\includegraphics[width=0.8\columnwidth]{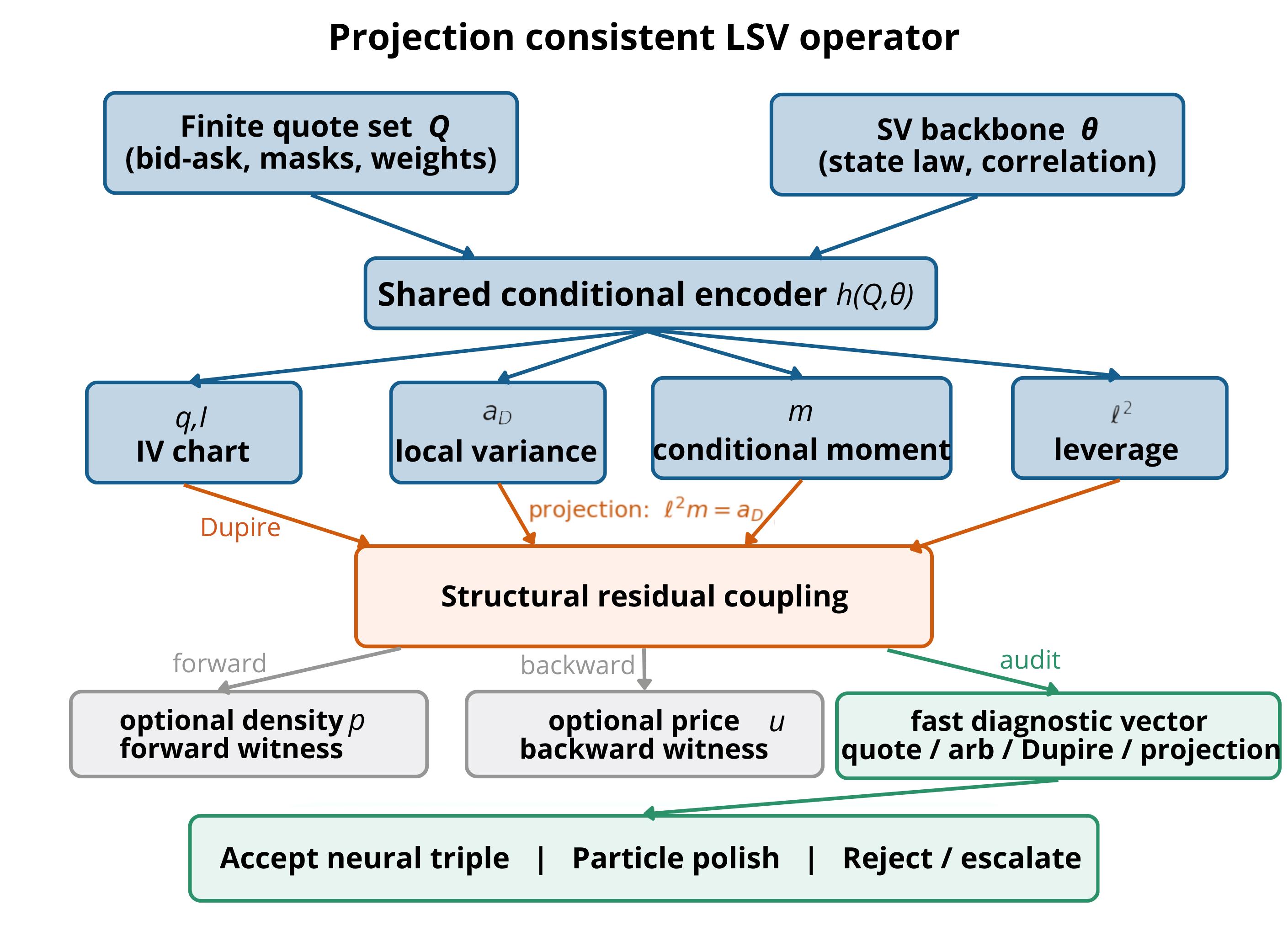}
\caption{Architecture and audit flow. The blue fast path maps finite quotes and
the selected SV backbone to the calibration triple. Orange residuals couple the
heads; gray optional witnesses define the full identifying audit extension.
Green diagnostics route the result to acceptance, particle polish or rejection.}
\label{fig:architecture}
\end{figure}

The paper makes four contributions. First, it derives a division-free Dupire
residual in log-IV coordinates and the quotient McKean--Vlasov equation directly
from normalized calls. Second, it implements conditional DeepONet and FNO
models whose chart and projection heads are coupled through structural
residuals; optional forward/backward
witnesses yield an auditable identifying system and a conditional consistency
theorem (Fig.~\ref{fig:architecture}). Third, it proves that finite quotes do
not identify a unique Dupire field, motivating residual-gated, annealed use of
heterogeneous desk calibrators rather than a single exact teacher.
Fourth, it separates online calibration latency from downstream pricing and
evaluates methods that produce LSV dynamics. The controlled synthetic
experiments support learning the fixed-point solve offline while retaining
residual checks and optional particle polish online. Historical-market and
production validation are outside the scope of this study.

\FloatBarrier
\section{Related work}
Volatility-surface construction and LSV calibration have largely developed as
separate layers. Arbitrage-free smoothing, SVI/SSVI and Andreasen--Huge
interpolation construct IV or local-volatility surfaces from quotes
\cite{fengler2009,gatheral2014,andreasen2011}, while neural smoothers amortize
the IV stage \cite{ackerer2020,wiedemann2024,yang2025}. Dupire recovers local
volatility from the fitted surface \cite{dupire1994}. For LSV models,
Gy\"ongy's projection relates local variance to a conditional
stochastic-variance moment \cite{gyongy1986}; particle schemes then solve for
leverage \cite{guyon2012,guyon2013,cozma2019,reisinger2025}. Existence and
stability are known mainly in regularized or restricted settings
\cite{abergel2010,lacker2020,bayer2024}.

Learning-based calibration has mainly targeted parameter inversion or a
single model and surface. Deep calibration learns parameter-to-price or
parameter-to-IV maps \cite{bayer2018,horvath2021}, while neural LSV methods
fit one surface or replace the conditional-expectation regression inside a
classical solver \cite{cuchiero2020,hakala2019}. Differential learning uses
derivative supervision \cite{huge2020}; DeepONet and FNO provide architectures
for maps between functions \cite{lu2021,li2021,kovachki2023}. Less attention
has been paid to an amortized map that spans the IV chart, Dupire field,
conditional moment and leverage while preserving the Dupire and Markovian
projection relations.

\section{From quotes to the nonlinear projection}
\subsection{Marginals and Dupire in IV coordinates}
Let $F_0(T)$ and $P_{0T}$ denote the forward and discount factor,
$X_t=\log(S_t/F_0(t))$, $k=\log(K/F_0(T))$, and
\begin{equation}
 c(T,k)=\frac{C(T,F_0(T)e^k)}{P_{0T}F_0(T)}
       =\E^{\Q^T}\!\left[(e^{X_T}-e^k)^+\right].             \label{eq:normcall}
\end{equation}
We assume deterministic rates and carry, as in the experiments, so this
normalization yields a common forward-martingale description; stochastic-rate
or hybrid corrections belong in the backbone.
For a smooth price surface, log-strike Breeden--Litzenberger differentiation
\cite{breeden1978} gives
$\Gamma_c:=c_{kk}-c_k=e^k r(T,k)$, with $r(T,\cdot)$ the density of $X_T$. The local-volatility log-forward
\begin{equation}
 \dd X_t=-\tfrac12a_D(t,X_t)\dd t+\sqrt{a_D(t,X_t)}\dd W_t
\end{equation}
has forward equation $\partial_t r=\frac12(\partial_{xx}+\partial_x)(a_Dr)$ and option-price equation
\begin{equation}
 c_T=\tfrac12a_D\Gamma_c,\qquad a_D=\frac{2c_T}{c_{kk}-c_k}.    \label{eq:dupireprice}
\end{equation}
This is the log-forward form of the local-volatility construction
\cite{dupire1994}; in finite data, an arbitrage-aware smoother is
therefore part of the inverse problem \cite{fengler2009}.

Desks quote Black IV rather than $c$. Write $c=\mathcal B(k,T,I)$, $b=I^2$ and $q=\log b$, so total variance is $w=Te^q$. Direct substitution in Gatheral's total-variance formula \cite{gatheral2014} yields
\begin{align}
G_q[q]&=\left(1-\frac{kq_k}{2}\right)^2
 +\frac{Te^q}{2}q_{kk}+\frac{Te^q}{4}q_k^2
 -\frac{T^2e^{2q}}{16}q_k^2,                                  \label{eq:gq}\\
a_D&=\frac{e^q(1+Tq_T)}{G_q[q]},\qquad
\cR_D(q,a)=aG_q[q]-e^q(1+Tq_T).                               \label{eq:dupq}
\end{align}
The division-free residual $\cR_D$ removes algebraic gradient explosions but
does not cure inverse ill-conditioning: when $G_q$ is small it is also weakly
sensitive to $a_D$. We therefore enforce margins $G_q[q]\ge g_{\min}>0$ and
$1+Tq_T\ge\chi_{\min}>0$ on the trusted domain and flag smaller-density regions
as extrapolation; positivity of $I$, $a_D$ and later $m$ is enforced by
exponential or softplus heads. We verified \eqref{eq:gq} symbolically by substituting $w=Te^q$ into
$g(w)=(1-kw_k/(2w))^2-\frac14w_k^2(1/w+1/4)+\frac12w_{kk}$; all $q_k^2$ terms and signs in \eqref{eq:gq} follow from this identity.

\subsection{The SV lift and quotient Fokker--Planck equation}
Let a generic Markovian (or Markovian-lifted) backbone be
\begin{equation}
 \dd Y_t=\mu_\theta(Y_t)\dd t+\Gamma_\theta(Y_t)\dd B_t,
 \quad \dd\langle W,B\rangle_t=\rho_\theta(Y_t)\dd t,
\end{equation}
with $\norm{\rho_\theta(y)}_2\le1$ so that the block covariance is admissible,
and set $V=V_\theta(Y)>0$, $\zeta=\Gamma_\theta\rho_\theta$.
Given leverage $\ell>0$,
\begin{equation}
 \dd X_t=-\tfrac12\ell^2(t,X_t)V(Y_t)\dd t
          +\ell(t,X_t)\sqrt{V(Y_t)}\dd W_t.                    \label{eq:lsv}
\end{equation}
For joint density $p(t,x,y)$ define $r_p=\int p\dd y$, $M_p=\int V(y)p\dd y$ and $m_p=M_p/r_p$. Integrating the fixed-$\ell$ forward equation over $y$ shows that the $X$-marginal has flux $\ell^2M_p$. Matching \eqref{eq:dupireprice} therefore requires Gy\"ongy's identity \cite{gyongy1986}
\begin{equation}
 \ell^2(t,x)m_p(t,x)=a_D(t,x).                                 \label{eq:projection}
\end{equation}
Because $m_p$ is computed under dynamics already containing $\ell$, this is a fixed point in laws, not a pointwise regression.

Substituting $\ell^2=a_Dr_p/M_p$ into the adjoint of the generator of \eqref{eq:lsv} gives the central nonlinear equation
\begin{align}
\partial_t p={}&\tfrac12(\partial_{xx}+\partial_x)
 \left[a_D\frac{r_p}{M_p}Vp\right]-\nabla_y\!\cdot(\mu_\theta p)\notag\\
&+\tfrac12\nabla^2_{yy}\!:\!\left[(\Gamma_\theta\Gamma_\theta^\top)p\right]
 +\partial_x\nabla_y\!\cdot\left[
 \sqrt{a_D\frac{r_p}{M_p}}\sqrt V\,\zeta p\right],             \label{eq:nonlinearFP}
\end{align}
with initial condition $p(0,x,y)=\delta_0(x)\nu_0^\theta(y)$.
The sign and coefficient of the mixed term follow from the cross-covariance
$\dd\langle X,Y\rangle_t=\ell\sqrt V\,\zeta\dd t$; the adjoint has a positive $\partial_x\nabla_y$ divergence. Under the imposed decay or no-flux boundary conditions, integrating \eqref{eq:nonlinearFP} in $y$ cancels all $y$-divergences and leaves exactly $\partial_t r_p=\frac12(\partial_{xx}+\partial_x)(a_Dr_p)$. Hence uniqueness of the marginal equation implies vanilla repricing.

The backward audit is equally important. For $\psi_k(x)=(e^x-e^k)^+$, $u_k(t,x,y;T)=\E[\psi_k(X_T)\mid X_t=x,Y_t=y]$ solves $u_t+\cL_{\ell,\theta}u=0$ with terminal value $\psi_k$. Forward--backward duality requires
\begin{equation}
 c(T,k)=\iint\psi_k(x)p(T,x,y)\dd x\dd y
 =\int u_k(0,0,y;T)\nu_0^\theta(\dd y).                       \label{eq:duality}
\end{equation}

\section{Residual-coupled neural operator}
Given an unordered quote book $Q$ with bid--ask weights and an SV backbone
$\theta$, the model learns the conditional map
$(Q,\theta,T,k)\mapsto(q,a_D,m,\ell^2)$, with $I=e^{q/2}$. We implement this map
with either a DeepONet branch--trunk decoder or an FNO; optional density and
backward-price heads provide witnesses for the full residual audit.
Section~4.1 describes the architecture and residual-based training.
Section~4.2 gives the structural basis for joint prediction and annealed
teacher supervision, together with the conditional consistency result.
Section~4.3 covers the pricing-stack interface and inference-time diagnostics.

\subsection{Operator construction and training}
\subsubsection{Conditional operator architecture and complexity}
Each normalized quote $z_i$ records maturity $T_i$, log-moneyness $k_i$,
bid and ask prices, and a confidence weight $\omega_i$.
A permutation-invariant Deep-Sets branch representation \cite{zaheer2017} is
\begin{equation}
\begin{aligned}
s_Q&=\sum_{i=1}^{n_q}\alpha_i\phi_Q(z_i),\qquad
\alpha_i=\omega_i\big/\!\sum_j\omega_j,\\[-2pt]
h_Q&=\rho_Q\!\left(s_Q,\log\!\sum_i\omega_i,n_{\rm act}\right),
\qquad h_\theta=\eta_\theta(\theta),
\end{aligned}                                                  \label{eq:encoder}
\end{equation}
with masks entering $\phi_Q$ when quote grids are irregular.  Concatenating
$h=(h_Q,h_\theta)$ makes the output conditional on both the observed marginal
information and the chosen SV fibre.  For field
$r\in\{q,a_D,m,\ell^2\}$, a branch--trunk realization is
\begin{equation}
\widetilde r(T,k;Q,\theta)
=B_r(h)^\top\tau_r(T,k)+\beta_r(T,k),                         \label{eq:deeponet}
\end{equation}
followed by $q=\widetilde q$ and positive transforms
$a_D=\epsilon_a+\operatorname{softplus}(\widetilde a)$,
$m=\epsilon_m+\operatorname{softplus}(\widetilde m)$ and
$\ell^2=\epsilon_\ell+\operatorname{softplus}(\widetilde\ell)$.
The log-variance coordinate then gives $I=e^{q/2}$ exactly.  Sharing the
branch state across heads makes smile level, skew and curvature available to
the projection heads, while separate trunks preserve field-specific spatial
resolution \cite{lu2021}. A gridded alternative replaces \eqref{eq:deeponet}
with spectral convolutions \cite{li2021} and retains the same heads and residuals; this is the FNO branch
used in the 500-state capacity experiment.

The branch--trunk decoder represents a function family that can be queried on
PDE, quote or stress grids without retraining. Its $C^2$ coordinate activations
provide $q_T,q_k,q_{kk}$ through $\tau_q$, allowing Dupire constraints between
quotes; ReLU would give $q_{kk}=0$ almost everywhere. The gridded FNO uses
spectral differentiation and interpolation but lacks this arbitrary-node
query guarantee. For $M$ nodes, operator inference costs
$O(n_qd_b+Md_t)$ and is parallel, whereas particle calibration requires at
least $O(Nn_t)$ simulation plus sequential regression and interpolation.
Thus the stochastic fixed point is learned offline and evaluated online by
\eqref{eq:encoder}--\eqref{eq:deeponet}.

\subsubsection{Coupled residual objective}
Let
$d_i=\operatorname{dist}(\mathcal B(k_i,T_i,I_i),[c_i^{\rm bid},c_i^{\rm ask}])$.
Set $c_\phi=\mathcal B(k,T,e^{q/2})$ and recall
$r_p=\int p\dd y$, $M_p=\int Vp\dd y$. For predicted fields (hats omitted),
the fully identifying population residual system is
\begin{align}
J_{\rm data}&=\E\sum_i\omega_i d_i^2+\lambda_C\E\norm{c_\phi-\Pi_C(Q)}_{L^2(\nu_{\cD})}^2,\notag\\
J_{\rm arb}&=\E\{\norm{\pos{g_{\min}-G_q[q]}}^2+
 \norm{\pos{\chi_{\min}-(1+Tq_T)}}^2\},\notag\\
J_{\rm Dup}&=\E\norm{a_DG_q[q]-e^q(1+Tq_T)}^2,\notag\\
J_{\rm proj}&=\E\norm{\ell^2m-a_D}^2,\qquad
J_{\rm mom}=\E\norm{mr_p-M_p}^2,                              \label{eq:losses}\\
J_{\rm fwd}&=\E\!\left[\norm{\partial_t p-\mathcal F_{a_D,\theta}(p)}_{W^{\prime}}^2+
 \norm{p(0)-\delta_0\!\otimes\!\nu_0^\theta}_{H^{-s}}^2\right.\notag\\[-2pt]
&\hspace{41mm}\left.+\cR_{\rm mass}^2+\cR_{\partial p}^2\right],\notag\\
J_{\rm bwd}&=\E\!\left[\norm{u_t+\cL_{\ell,\theta}u}^2+
 \norm{u(T)-\psi_k}^2+\cR_{\partial u}^2\right],\notag\\
J_{\rm dual}&=\E\left|\int u(0,0,y)\nu_0^\theta(\dd y)
 -\iint\psi_k(x)p(T,x,y)\dd x\dd y\right|^2.\notag
\end{align}
Here $\mathcal F_{a_D,\theta}$ is the right-hand side of
\eqref{eq:nonlinearFP}, $\cR_{\rm mass}$ is the $L_t^2$ mass error, and the
boundary terms impose the stated decay, zero-flux and pricing conditions.
Finite bid--ask observations do not identify the off-grid chart, so
Theorem~\ref{thm:consistency} requires $\Pi_C$,
supplied either as a cleaner input or as collocation supervision. The fast
models use the data, arbitrage, Dupire and projection terms; the remaining
losses define the stronger witness system and are not evaluated as an
empirical ablation. Automatic differentiation supplies $q_T,q_k,q_{kk}$,
making $J_{\rm Dup}$ a Sobolev loss; explicit Greek labels may also be added
\cite{huge2020}.

\subsubsection{Residual normalization and curriculum}
The losses in \eqref{eq:losses} have different units and effective sample
sizes, so fixed raw weights are not meaningful.  For residual family $r$, let
$s_r$ be a robust scale estimated on the current curriculum stage (median
absolute deviation for pointwise residuals and a seed variance for particle
views).  The optimized quantities are
\begin{equation}
\bar J_r=\frac{J_r}{s_r^2+\varepsilon},\qquad
J_{\rm agg}=\frac1\alpha\log\sum_{r\in\mathcal T}
\pi_r\exp(\alpha\bar J_r),                                   \label{eq:normalizedloss}
\end{equation}
with $\sum_r\pi_r=1$.  Scales are stopped-gradient statistics: otherwise a
head could reduce its weight by inflating its own normalization.  The smooth
maximum in \eqref{eq:normalizedloss} concentrates updates on the least
satisfied identity while remaining differentiable; clipping its exponent
prevents one boundary outlier from dominating a minibatch.

Training proceeds in three stages. The chart stage learns Black inversion and
quote interpolation on collocation points stratified by maturity, moneyness and
the bid--ask mask. The generator stage activates calendar, butterfly and
Dupire residuals, oversampling short maturities and trusted-interior edges. The
projection stage introduces SV backbones and moment/leverage heads, with
optional forward/backward witnesses for the full audit. The curriculum moves
from constant variance through Heston-type states to rough or multi-factor
teachers. Equation~\eqref{eq:teacher} down-weights noisy or inconsistent
teachers, and annealing begins after validation residuals stop improving.
Market-state minibatches keep derivatives and integral constraints coherent;
optional density and backward heads are discarded at inference.

\subsection{Structural justification and conditional consistency}
The residual system must constrain the calibration fields between quote nodes,
not merely reproduce the observed prices. Two design choices are central: the
fields share one conditional representation, and teacher labels are annealed
rather than imposed as exact truth. After establishing these points, we state
the assumptions under which vanishing residuals identify the target operator.

\subsubsection{Joint learning of the calibration fields}
A leverage-only target hides which part of a repricing error comes from the marginal generator and which from the conditional moment.  Let $(a_D,m,\ell)$ be a consistent triple and let hats denote predictions.  Introducing the predicted projection residual
$\widehat R_{\rm proj}=\widehat\ell^{\,2}\widehat m-\widehat a_D$ gives the exact quotient identity
\begin{align}
\widehat\ell^{\,2}-\ell^2
&=\frac{\widehat a_D-a_D}{\widehat m}
-\frac{a_D(\widehat m-m)}{m\widehat m}
+\frac{\widehat R_{\rm proj}}{\widehat m}.                    \label{eq:decomp}
\end{align}
Consequently, if $m,\widehat m\ge m_{\min}>0$ and the local-variance,
moment and projection errors are bounded by $\delta_a,\delta_m,\delta_p$,
\begin{equation}
\norm{\widehat\ell^{\,2}-\ell^2}\le
\frac{\delta_a}{m_{\min}}+
\frac{\norm{a_D}_\infty\delta_m}{m_{\min}^2}+
\frac{\delta_p}{m_{\min}}.                                   \label{eq:jointbound}
\end{equation}
Joint heads and residuals control every term in \eqref{eq:decomp}; pointwise
imitation of a stochastic leverage label controls none of the first three.
The shared encoder also has a statistical role: IV level, skew, curvature and
term structure inform $a_D$, while the same latent market state conditions how
that marginal is embedded in the SV fibre.

\begin{mainproposition}[Truncated-domain price stability]
Fix the backbone $\theta$ and a bounded, variance-truncated domain $\Omega$.
Let $u$ and $\widehat u$ be the backward prices generated by the true and
predicted LSV coefficients. Assume that both generators are uniformly
parabolic on $\Omega$, their coefficients lie in $W^{1,\infty}(\Omega)$,
$m,\widehat m\ge m_{\min}>0$, and
$\ell,\widehat\ell\ge\ell_{\min}>0$. For every Lipschitz terminal payoff with
compatible boundary data, there is a constant $C_T$ such that
\begin{equation}
|\widehat u(0)-u(0)|
\le C_T\!\left(
\norm{\widehat a_D-a_D}
+\norm{\widehat m-m}
+\norm{\widehat R_{\rm proj}}\right).                         \label{eq:pricestability}
\end{equation}
Here the norms are $L^\infty(\Omega)$ norms, and $C_T$ depends only on
$T,\Omega$, the parabolicity constants and the coefficient bounds.
\end{mainproposition}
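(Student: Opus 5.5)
The plan is a standard parabolic stability argument: subtract the two backward equations, treat the coefficient mismatch as a source term, and bound it by the maximum principle. Write $\cL=\cL_{\ell,\theta}$ and $\widehat\cL=\cL_{\widehat\ell,\theta}$. Since the backbone is fixed, the two generators differ only in the terms carrying $\ell^2$ or $\ell$: the second-order term $\tfrac12\ell^2V\partial_{xx}$, the drift $-\tfrac12\ell^2V\partial_x$, and the mixed term $\ell\sqrt V\,\zeta\cdot\nabla_y\partial_x$. The $y$-block is common to both. The difference $e=\widehat u-u$ then solves
\[
e_t+\widehat\cL e=-(\widehat\cL-\cL)u \quad\text{in }\Omega, \qquad e(T)=0, \qquad e=0 \text{ on the lateral boundary},
\]
where the lateral condition uses the compatible boundary data. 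By the parabolic maximum principle on the bounded domain $\Omega$,
\[
\norm{e}_\infty\le T\,\norm{(\widehat\cL-\cL)u}_{L^\infty(\Omega)}.
\]

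The next step bounds $(\widehat\cL-\cL)u$ by the coefficient gaps times the derivatives of $u$. Pointwise,
\[
|(\widehat\cL-\cL)u|\le C\,|\widehat\ell^{\,2}-\ell^2|\bigl(|u_{xx}|+|u_x|\bigr)+C\,|\widehat\ell-\ell|\,|\nabla_y u_x|.
\]
Because $\ell,\widehat\ell\ge\ell_{\min}$, we have $|\widehat\ell-\ell|=|\widehat\ell^{\,2}-\ell^2|/(\widehat\ell+\ell)\le|\widehat\ell^{\,2}-\ell^2|/(2\ell_{\min})$, so the whole source is controlled by $|\widehat\ell^{\,2}-\ell^2|$. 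The quotient identity \eqref{eq:decomp} and the bound \eqref{eq:jointbound} then give, in $L^\infty(\Omega)$,
\[
\norm{\widehat\ell^{\,2}-\ell^2}_\infty\le m_{\min}^{-1}\norm{\widehat a_D-a_D}+\norm{a_D}_\infty m_{\min}^{-2}\norm{\widehat m-m}+m_{\min}^{-1}\norm{\widehat R_{\rm proj}}.
\]
This is exactly the right-hand side of \eqref{eq:pricestability}, with constants depending only on $m_{\min}$, $\ell_{\min}$ and the coefficient bounds.

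The main obstacle is controlling the second derivatives $u_{xx}$ and $\nabla_y u_x$ uniformly. The payoff is only Lipschitz, so these derivatives blow up like $(T-t)^{-1/2}$ near maturity, and $W^{1,\infty}$ coefficients do not give classical Schauder bounds. To handle this, I would avoid estimating the $L^\infty$ norm of $D^2u$ directly. Instead I would use the Duhamel representation
\[
e(0)=\int_0^T\!\!\int_\Omega G(0,z;s,\xi)\,\bigl[(\widehat\cL-\cL)u\bigr](s,\xi)\,\dd\xi\,\dd s
\]
with the Green's function $G$ of $\widehat\cL$. Because the coefficients are in $W^{1,\infty}$, the leading operator $\tfrac12\widehat\ell^{\,2}V\partial_{xx}$ can be put in divergence form, with first-order corrections bounded by the $W^{1,\infty}$ norms. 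Integrating by parts once moves one derivative onto $G$, using $|\nabla_\xi G|\lesssim (s)^{-1/2}\times$Gaussian (from Aronson/Nash-type bounds under uniform parabolicity). It leaves only $\nabla u$ on the other factor, and $\nabla u$ is bounded by the Lipschitz constant of the payoff through a comparison argument for the derivative equation, again using $W^{1,\infty}$ coefficients.

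A catch is that this integration by parts also places a derivative on the coefficient gap $\widehat\ell^{\,2}-\ell^2$. I would absorb it through the $W^{1,\infty}$ bounds on both coefficient sets, so the gap's derivative is bounded by a constant rather than by the gap itself. The time singularities $s^{-1/2}$ and $(T-s)^{-1/2}$ are integrable, giving a factor of order $\sqrt T$, and together with the previous step this yields $C_T$. If the divergence-form step turns out to be delicate because of that derivative on the gap, the fallback is a simpler, weaker route: mollify the payoff at scale $\varepsilon$, use $\norm{D^2u^\varepsilon}\lesssim\varepsilon^{-1}$, and balance against the $O(\varepsilon)$ payoff error. This would give the stated estimate up to a square-root loss.
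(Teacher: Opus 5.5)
Your reduction to the coefficient gap is fine and matches the first step of the paper's sketch. This covers the error equation for $e=\widehat u-u$, the maximum principle, and $\norm{\widehat\ell-\ell}_\infty\le\norm{\widehat\ell^{\,2}-\ell^2}_\infty/(2\ell_{\min})\le C\Delta$ via \eqref{eq:decomp}, where $\Delta$ is the bracket in \eqref{eq:pricestability}. The gap is in the step that turns this coefficient gap into a price gap. Put $\delta=\widehat\ell^{\,2}-\ell^2$. Integrating $\int G\,\delta\,V u_{xx}$ by parts necessarily produces the term $\int_0^T\!\!\int G\,(\partial_x\delta)\,V u_x$, and you propose to bound $\partial_x\delta$ through the $W^{1,\infty}$ norms. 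That term is then of size $C\,T\,\norm{\partial_x\delta}_\infty\norm{u_x}_\infty=O(1)$ and does not shrink with $\Delta$. What you actually obtain is $|\widehat u(0)-u(0)|\le C\Delta+C'$, which is not \eqref{eq:pricestability}. The sup-norms on the right-hand side cannot control $\partial_x\delta$. For example, $\widehat a_D=a_D+\epsilon\sin(x/\epsilon)$ stays uniformly bounded in $W^{1,\infty}$ while $\Delta=O(\epsilon)$ and $\partial_x\delta=O(1)$. Your fallback gives only $O(\sqrt\Delta)$, as you note, so neither route proves the linear estimate.

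The missing idea is to never differentiate the gap, nor $u$ beyond what a Lipschitz payoff provides. The paper does this by synchronous coupling. It extends the coefficients boundedly and Lipschitz-continuously and drives both LSV diffusions by the same Brownian motions; with the backbone fixed, $\widehat Y=Y$, so only $X$ differs. Then $|\widehat\ell^{\,2}(t,\widehat X_t)-\ell^2(t,X_t)|\le L|\widehat X_t-X_t|+\norm{\widehat\ell^{\,2}-\ell^2}_\infty$, and likewise for $\widehat\ell-\ell$ in the martingale part. Burkholder--Davis--Gundy and Gronwall give $\E\sup_{t\le T}|\widehat Z_t-Z_t|^2\le C_T\Delta^2$, and the payoff's Lipschitz constant converts this into \eqref{eq:pricestability}. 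Only sup-norms of the gap and the Lipschitz constant of one coefficient set enter. The mixed $\nabla_y\partial_x$ term never needs separate treatment, and the extension removes the lateral boundary. If you want to stay with PDEs, your first display can be repaired without integration by parts. The Green's function of $\widehat\cL$ is nonnegative with mass at most one, so $|e(0)|\le\int_0^T\norm{(\widehat\cL-\cL)u(s)}_\infty\,\dd s$. The bound $\norm{D^2u(s)}_\infty\lesssim(T-s)^{-1/2}$ is then time-integrable. Contrary to your remark, Lipschitz coefficients are in particular H\"older, so parametrix/Schauder estimates do supply this bound, at least in the interior or after the same extension the paper uses.
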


\begin{proof}[Proof sketch]
Let $\Delta$ denote the sum on the right of \eqref{eq:pricestability}.
By \eqref{eq:decomp} and the lower bound on $m$,
$\norm{\widehat\ell^{\,2}-\ell^2}_\infty\le C\Delta$; the lower bound on
leverage then gives $\norm{\widehat\ell-\ell}_\infty\le C\Delta$. Extend the
coefficients boundedly and Lipschitz-continuously and couple
$Z=(X,Y)$ and $\widehat Z=(\widehat X,\widehat Y)$ with the same Brownian
motion. Burkholder--Davis--Gundy and Gronwall estimates yield
$\E\sup_{t\le T}|\widehat Z_t-Z_t|^2\le C_T\Delta^2$.
The Lipschitz payoff bound gives \eqref{eq:pricestability}.
\end{proof}
Genuine Heston degeneracy at $V=0$, barrier corners and discontinuous payoffs
are outside the proposition and require independent grid-convergence audits.
The constant may also deteriorate at long horizons, so the diagnostic below
is not a universal confidence interval.

\subsubsection{Annealed teacher supervision}
Finite quotes leave derivative information underdetermined. The following
result makes this non-identification explicit.

\begin{mainproposition}[Finite quotes do not identify Dupire]
Let $q\in C^{1,2}(\cD)$ be strictly arbitrage-admissible and
$G=\{(T_i,k_i)\}_{i=1}^n$. For every interior
$z_0=(T_0,k_0)\in\cD\setminus G$, there exists
$\psi\in C_c^\infty(\cD)$ such that $\psi|_G=0$ and
$D'(q)[\psi](z_0)\ne0$, where
$D(q)=e^q(1+Tq_T)/G_q[q]$.
\end{mainproposition}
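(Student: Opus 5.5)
The plan is to compute the Fréchet derivative $D'(q)[\psi](z_0)$ explicitly and show that it is a nontrivial linear differential operator in $\psi$ at the point $z_0$. Once we know it involves $\psi$ and its derivatives at $z_0$ with some nonzero coefficient, we can choose $\psi$ supported in a small ball around $z_0$. That ball avoids the finite set $G$, so $\psi|_G=0$ holds automatically.

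Write $N(q)=e^q(1+Tq_T)$, so that $D=N/G_q[q]$. Differentiating,
\[
D'(q)[\psi]=\frac{N'(q)[\psi]\,G_q[q]-N(q)\,G_q'[q][\psi]}{G_q[q]^2}.
\]
Strict admissibility gives $G_q[q](z_0)>0$, so $D$ is well defined and differentiable there, and the denominator is harmless. From \eqref{eq:gq}, the second-derivative part of the linearization is $G_q'[q][\psi]\ni\frac{Te^q}{2}\psi_{kk}$. The term $N'(q)[\psi]=e^q\psi(1+Tq_T)+e^qT\psi_T$ contains no $\psi_{kk}$. Hence the coefficient of $\psi_{kk}(z_0)$ in $D'(q)[\psi](z_0)$ is
\[
-\frac{N(q)(z_0)\,T_0e^{q(z_0)}}{2\,G_q[q](z_0)^2}.
\]
This coefficient is nonzero because $T_0>0$ (as $z_0$ is interior), $e^q>0$, and $N(q)(z_0)=e^q(1+T_0q_T)>0$ by the strict calendar condition $1+Tq_T>0$. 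Positivity of the margins is exactly where strict admissibility enters.

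For the construction, pick $r>0$ with $\overline{B_r(z_0)}\subset\cD$ and $\overline{B_r(z_0)}\cap G=\emptyset$; this is possible since $G$ is finite and $z_0\notin G$. Let $\chi\in C_c^\infty(B_r(z_0))$ with $\chi\equiv1$ near $z_0$, and set
\[
\psi(T,k)=\tfrac12(k-k_0)^2\chi(T,k).
\]
At $z_0$ we then have $\psi=\psi_T=\psi_k=0$ and $\psi_{kk}=1$. All lower-order terms of the linearization therefore vanish at $z_0$, and only the $\psi_{kk}$ term survives. So $D'(q)[\psi](z_0)$ equals the nonzero coefficient above, and $\psi|_G=0$ because $\operatorname{supp}\psi\cap G=\emptyset$.

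The main obstacle is bookkeeping rather than depth. One must confirm that $\psi_{kk}$ enters $G_q'[q][\psi]$ only through $\frac{Te^q}{2}\psi_{kk}$, and that the $e^q$-derivative term $\frac{Te^q}{2}q_{kk}\psi$ is killed by $\psi(z_0)=0$. One must also check that $q\in C^{1,2}$ suffices for the Gateaux derivative along smooth $\psi$ to exist pointwise. Both follow from the polynomial structure of \eqref{eq:gq} in $(q,q_k,q_{kk},q_T)$ together with the pointwise evaluation.
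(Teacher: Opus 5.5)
Your proof is correct. It uses the same localization device as the paper: a bump supported away from the finite set $G$, so that only the jet of $\psi$ at $z_0$ matters. The difference is which derivative you isolate.

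The paper takes $\psi=(T-T_0)\varrho$, so that at $z_0$ one has $\psi=\psi_k=\psi_{kk}=0$ and $\psi_T=1$. Since $G_q[q]$ contains no $q_T$, the denominator is unperturbed and $D'(q)[\psi](z_0)=e^{q(z_0)}T_0/G_q[q](z_0)$. That computation needs only $G_q[q](z_0)>0$ and $T_0>0$; the calendar margin is never used.

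Your curvature bump perturbs only the denominator. It gives
$-a_D(z_0)\,T_0e^{q(z_0)}/\bigl(2G_q[q](z_0)\bigr)$, where $a_D(z_0)=D(q)(z_0)$. So you additionally need $1+T_0q_T(z_0)>0$, i.e.\ $a_D(z_0)\neq 0$. At a point of vanishing local variance your direction would degenerate, while the paper's would not. Since strict admissibility supplies the calendar margin, this costs you nothing here.

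In exchange, your version shows that the ambiguity survives even with the calendar slope at $z_0$ held fixed. Pure strike-interpolation freedom within a maturity slice moves the Dupire value, and with a definite sign: raising $q_{kk}$ lowers $a_D$. That is exactly the freedom an IV smoother exercises on a sparse strike grid.

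The paper also adds a closing remark: $q+\epsilon\psi$ stays strictly admissible for small $|\epsilon|$, because the margins move by $O(\epsilon)$ on the compact support of $\psi$ and are unchanged elsewhere. This is not part of the displayed statement. It is, however, what turns the derivative computation into genuine non-identification among admissible charts that agree on $G$. Your $\psi$ supports the same one-line argument, and it is worth adding.
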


\begin{proof}
Choose $\varrho\in C_c^\infty(\cD)$ with
$\operatorname{supp}\varrho\cap G=\varnothing$ and $\varrho(z_0)=1$; set
$\psi(T,k)=(T-T_0)\varrho(T,k)$. At $z_0$,
$\psi=\psi_k=\psi_{kk}=0$ and $\psi_T=1$, so the directional derivative of
$G_q[q]$ vanishes while that of $e^q(1+Tq_T)$ equals $e^{q(z_0)}T_0$.
Hence
$D'(q)[\psi](z_0)=e^{q(z_0)}T_0/G_q[q](z_0)\ne0$.
Strict admissibility is open in $C^{1,2}(\cD)$, so $q+\epsilon\psi$ remains
admissible for sufficiently small $|\epsilon|$.
\end{proof}

Accordingly, SSVI, Andreasen--Huge and particle outputs are treated as
heterogeneous instruments. For head $h$, model teacher $j$ as
$h_j^T=h^\star+b_j+\varepsilon_j$, with $\E\varepsilon_j=0$ and
$\operatorname{Var}\varepsilon_j=s_j^2$, and assign the weight
\begin{equation}
 \lambda_j(z,\tau)=\lambda_j^{(0)}(\tau)
 \frac{\exp[-\gamma|\cR(h_j^T)(z)|]}{s_j^2(z)+\epsilon},       \label{eq:teacher}
\end{equation}
where $\lambda_j^{(0)}(\tau)\downarrow0$ as clean labels and residual evidence
take over. Under local coercivity,
$J_{\rm res}(h)-J_{\rm res}(h^\star)\ge
\mu\norm{h-h^\star}^2$, minimizers of $J_{\rm res}+\eta J_T$ lie
$O(\sqrt\eta)$ from $h^\star$; a nonvanishing weight on a biased teacher instead
retains weighted-average bias. This explains why teachers stabilize
scarce-label training, whereas teacher-heavy full-label training inherits SSVI
bias.

Particle teachers have structured noise. For particles $(X_t^i,V_t^i)$,
$i=1,\ldots,N$, bandwidth $h$ and kernel $K_h$, the conditional-moment estimate
is
\begin{equation}
\widehat m_{N,h,\Delta t}(t,x)=
\frac{\sum_iV_t^iK_h(X_t^i-x)}{\sum_iK_h(X_t^i-x)},\qquad
\widehat\ell^{\,2}=\frac{a_D}{\widehat m_{N,h,\Delta t}}.       \label{eq:kernel}
\end{equation}
In one conditioning dimension, a second-order kernel has the indicative
error balance
\begin{equation}
\E(\widehat m-m)^2\approx C_1h^4+\frac{C_2}{Nh}
+C_3\Delta t^\beta+C_4\Delta x^\gamma,                        \label{eq:kernelmse}
\end{equation}
so $h\asymp N^{-1/5}$ balances smoothing bias and variance. Time stepping,
interpolation, boundaries and the quotient in \eqref{eq:kernel} add
heteroskedasticity. A coarse run therefore supplies useful geometry and a warm
start rather than noiseless pointwise truth; amortization averages this noise
across market states.

\begin{figure}[htbp]
 \centering
 \includegraphics[width=\columnwidth]{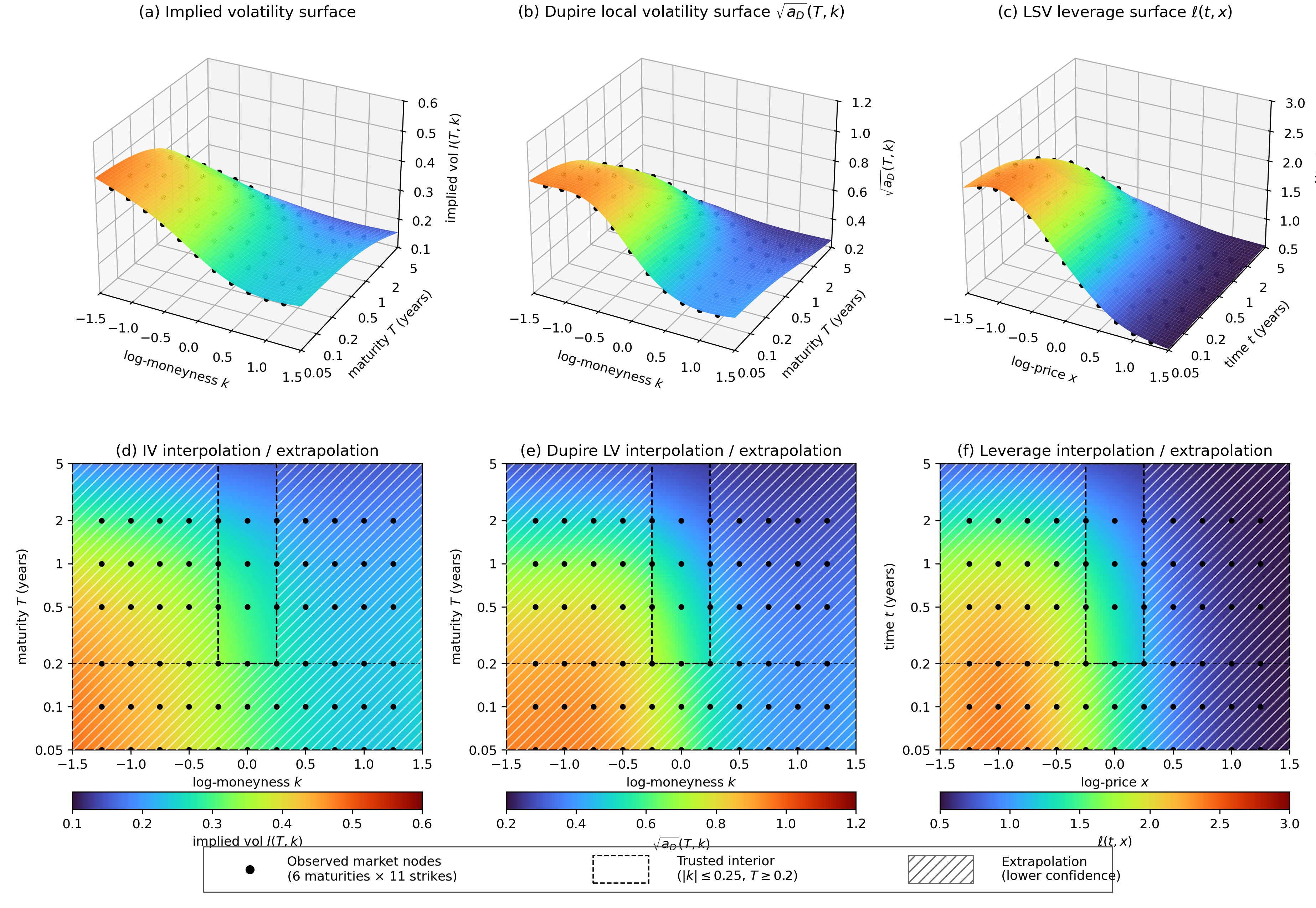}
 \caption{Representative calibration triple. Black points mark quote nodes, the
 dashed box the trusted interior, and hatching lower-confidence extrapolation;
 colors denote field values, not errors.}
 \label{fig:surfaces}
\end{figure}

Figure~\ref{fig:surfaces} is a qualitative diagnostic of quote coverage and
extrapolation, not an accuracy plot.

\subsubsection{Conditional consistency}
The global well-posedness of calibrated LSV models is delicate \cite{abergel2010,lacker2020,bayer2024}. We therefore state the conditional guarantee needed for amortization.

\begin{maintheorem}[Conditional identification and amortized convergence]
\label{thm:consistency}
Let $\Xi$ be the set of admissible inputs $\xi=(Q,\theta)$, endowed with a
probability law $\Pi$, and fix a cleaning map $\Pi_C$. For each $\xi$, let
$H_\xi^\star=(q,a_D,p,m,\ell^2,u)_\xi^\star$ be the full solution selected by
$(\Pi_C(Q),\theta)$, and let $\mathcal R_\xi$ denote \eqref{eq:losses}. Assume
that the selected charts are compact in $C^{1,2}(\cD)$; Black vega,
$G_q[q]$, $1+Tq_T$, $r_p$ and $m$ are uniformly positive on the trusted
domain; and the forward and backward problems are uniquely solvable. Suppose
also that the local coercivity estimate
$\norm{H-H_\xi^\star}_{\mathcal X}\le
C\norm{\mathcal R_\xi(H)}_{\mathcal Y}$ holds uniformly for
$H\in\mathcal U_\xi$ on admissible neighbourhoods $\mathcal U_\xi$. Let
$\mathfrak H_n$ be
graph-norm dense classes taking values in $\mathcal U_\xi$, and assume a
uniform law of large numbers for their empirical residual risks. Define
$\mathcal J(\mathcal H)=
\E_\Pi\norm{\mathcal R_\xi(\mathcal H(\xi))}_{\mathcal Y}^2$.
Then: (i) $\mathcal R_\xi(H)=0$ implies $H=H_\xi^\star$;
(ii) $\inf_{\mathcal H\in\mathfrak H_n}\mathcal J(\mathcal H)\to0$; and
(iii) any empirical near-minimizer $\widehat{\mathcal H}_{n,N}$ with
vanishing sampling and optimization gaps satisfies
\begin{equation}
 \norm{\widehat{\mathcal H}_{n,N}-\mathcal H^\star}
 _{L^2(\Pi;\mathcal X)}
 \xrightarrow{\mathbb P}0.                                  \label{eq:operatorconsistency}
\end{equation}
Consequently, the associated calibration triple converges to
$\mathcal G^\star$ in $L^2(\Pi;C(\cD))$.
\end{maintheorem}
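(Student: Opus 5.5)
The proof follows the standard template "coercivity plus density plus uniform LLN implies consistency", carried out in three steps matching (i)--(iii).

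\emph{Part (i).} First I check that $H_\xi^\star$ is a zero of $\mathcal R_\xi$. The chart $q^\star$ is selected by $\Pi_C(Q)$, so the cleaned-chart term and the bid--ask term of $J_{\rm data}$ vanish. The arbitrage penalties vanish because $G_q[q]$ and $1+Tq_T$ are uniformly positive; I take the margins $g_{\min},\chi_{\min}$ below those lower bounds. $J_{\rm Dup}$ vanishes by \eqref{eq:dupq}. $J_{\rm proj}$ and $J_{\rm mom}$ vanish by Gy\"ongy's identity \eqref{eq:projection} and the definition $m_p=M_p/r_p$, which is legitimate because $r_p>0$. $J_{\rm fwd}$, $J_{\rm bwd}$ and $J_{\rm dual}$ vanish by \eqref{eq:nonlinearFP}, the backward equation and the duality \eqref{eq:duality}. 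Then, for any $H\in\mathcal U_\xi$ with $\mathcal R_\xi(H)=0$, the coercivity estimate gives $\norm{H-H_\xi^\star}_{\mathcal X}\le 0$. Uniqueness of the forward and backward problems is used only to make $H_\xi^\star$ well defined, meaning the fixed point of \eqref{eq:nonlinearFP} is unique. So identification is essentially immediate once coercivity is granted.

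\emph{Part (ii).} Measurability of $\xi\mapsto H_\xi^\star$ follows from compactness of the selected charts and the unique solvability, so $\mathcal H^\star$ lies in $L^2(\Pi;\mathcal X)$.
\begin{itemize}
\item Graph-norm density of $\mathfrak H_n$ gives $\mathcal H_n\in\mathfrak H_n$ with $\norm{\mathcal H_n-\mathcal H^\star}\to0$ in the graph norm. This norm controls $\mathcal R_\xi$.
\item I then need continuity of $H\mapsto\mathcal R_\xi(H)$ from $\mathcal X$ to $\mathcal Y$, locally Lipschitz with constants uniform in $\xi$. The residuals are polynomial in $q$, its derivatives up to $(T,kk)$, $e^q$, and the quotient $r_p/M_p$ inside $\mathcal F_{a_D,\theta}$.
\item On $\mathcal U_\xi$, uniform positivity of $G_q$, $r_p$ and $m$, together with compactness in $C^{1,2}$, bounds all these nonlinearities. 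The square root $\sqrt{a_Dr_p/M_p}$ in the mixed term is also Lipschitz because its argument is bounded below.
\item Hence $\norm{\mathcal R_\xi(\mathcal H_n(\xi))}_{\mathcal Y}\le L\norm{\mathcal H_n(\xi)-H_\xi^\star}$. Squaring and integrating against $\Pi$ gives $\mathcal J(\mathcal H_n)\to0$.
\end{itemize}

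\emph{Part (iii).} I decompose
\[
\mathcal J(\widehat{\mathcal H}_{n,N})\le \inf_{\mathfrak H_n}\mathcal J+2\sup_{\mathfrak H_n}|\widehat{\mathcal J}_N-\mathcal J|+\text{opt.\ gap}.
\]
The first term tends to zero by (ii), the second in probability by the uniform LLN, and the third by assumption, so $\mathcal J(\widehat{\mathcal H}_{n,N})\to0$ in probability. Because $\widehat{\mathcal H}_{n,N}(\xi)\in\mathcal U_\xi$, squaring the coercivity estimate and integrating gives $\norm{\widehat{\mathcal H}_{n,N}-\mathcal H^\star}^2_{L^2(\Pi;\mathcal X)}\le C^2\mathcal J(\widehat{\mathcal H}_{n,N})$, which yields \eqref{eq:operatorconsistency}.

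\emph{Final claim.} I need $\mathcal X$ to embed continuously into $C(\cD)$ for the components $(q,a_D,m,\ell^2)$; I would build this into the definition of $\mathcal X$, for instance via a $C^{1,2}$ norm on $q$. The map $q\mapsto I=e^{q/2}$ is Lipschitz on bounded sets, and compactness of the charts gives these bounds. For leverage I would use the quotient bound \eqref{eq:jointbound} with $m\ge m_{\min}$ to pass from $\ell^2$ to $\ell$ in $C(\cD)$.

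The main obstacle is the continuity and uniform-Lipschitz step in (ii) for the forward term. $\mathcal F_{a_D,\theta}$ involves the nonlocal quotient $r_p/M_p$ and the weak norm $W'$, and the Dirac initial datum sits in $H^{-s}$. Here I would restrict $\mathcal U_\xi$ to densities with $r_p$ bounded below on the trusted domain and estimate the quotient in $L^\infty$ before pairing with test functions. If this fails near $t=0$, where $r_p$ degenerates toward the Dirac mass, I would weight the forward residual in time so that the first step still goes through.
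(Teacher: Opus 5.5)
Your (ii), (iii) and final claim follow the paper's route. For (ii) you use density plus residual continuity. For (iii) you use the uniform law of large numbers and the optimization gap to get $\mathcal J(\widehat{\mathcal H}_{n,N})\to0$ in probability, then integrate the squared coercivity estimate over $\Pi$. Your uniform-Lipschitz argument also gives more detail than the paper's sketch.

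The gap is in (i). You derive identification from the coercivity estimate, but that estimate is assumed only for $H\in\mathcal U_\xi$. So you prove only that $H_\xi^\star$ is the unique zero of $\mathcal R_\xi$ inside the neighbourhood $\mathcal U_\xi$. Statement (i) has no such restriction, and the unrestricted version is what the paper means by calling \eqref{eq:losses} a fully identifying system. Your argument does not rule out a zero-residual $H$ outside $\mathcal U_\xi$. Two symptoms point to this: positive Black vega is never used in your proof, and you treat unique solvability of the forward and backward problems as merely making $H_\xi^\star$ well defined.

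The missing idea is that the residual system is triangular. Zero residual can be unwound one component at a time, with no coercivity at all:
\begin{itemize}
\item With $\lambda_C>0$, $J_{\rm data}=0$ gives $\mathcal B(k,T,e^{q/2})=\Pi_C(Q)$ ($\nu_{\cD}$-a.e., hence everywhere by continuity). Positive vega makes Black inversion injective, so $q=q_\xi^\star$.
\item $J_{\rm Dup}=0$ and $G_q[q_\xi^\star]>0$ give $a_D=a_{D,\xi}^\star$.
\item With $a_D$ now fixed, the forward, initial, mass and boundary residuals say that $p$ solves \eqref{eq:nonlinearFP}. Uniqueness gives $p=p_\xi^\star$.
\item $J_{\rm mom}=0$ and $r_{p_\xi^\star}>0$ give $m=m_\xi^\star$.
\item $J_{\rm proj}=0$ and $m_\xi^\star>0$ give $\ell^2=(\ell_\xi^\star)^2$.
\item With $\ell$ fixed, the terminal and backward residuals plus uniqueness give $u=u_\xi^\star$.
\end{itemize}
Every positivity hypothesis here is evaluated at the true solution, so nothing is required of $H$. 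Coercivity is then needed only quantitatively, in (iii).

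Two smaller points. First, your (ii) needs the bid--ask term to vanish at $H_\xi^\star$. That is an assumption that $\Pi_C(Q)$ reprices inside the spread, not a consequence of the chart being selected by $\Pi_C(Q)$. Second, \eqref{eq:jointbound} bounds $\ell^2$, not $\ell$. To pass from $\ell^2$ to $\ell$ in $L^2(\Pi;C(\cD))$, use $|\sqrt x-\sqrt y|\le\sqrt{|x-y|}$ together with Jensen's inequality.
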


\begin{proof}[Proof sketch]
Once $\Pi_C$ selects $c_\xi^\star$, the positive-vega bound makes Black
inversion pointwise injective and identifies its continuous chart
$I_\xi^\star$, hence $q_\xi^\star=2\log I_\xi^\star$. Finite bid--ask
observations alone do not provide this identification, which is why the
cleaning convention is fixed. Equation~\eqref{eq:dupq} and the lower bound on
$G_q$ then identify $a_{D,\xi}^\star$. The forward initial, mass and boundary
residuals, together with uniqueness of \eqref{eq:nonlinearFP}, identify
$p_\xi^\star$. Since $r_{p_\xi^\star}$ is bounded away from zero,
$J_{\rm mom}=0$ gives
$m_\xi^\star=M_{p_\xi^\star}/r_{p_\xi^\star}$; positivity of
$m_\xi^\star$ then identifies
$(\ell_\xi^\star)^2=a_{D,\xi}^\star/m_\xi^\star$. The terminal and backward
residuals identify $u_\xi^\star$, while the dual residual gives an independent
forward--backward audit. This proves (i).

Graph-norm density and residual continuity give (ii). The local quotient
estimate
$|a/m-a'/m'|\le |a-a'|/m_{\min}
+\norm{a'}_\infty|m-m'|/m_{\min}^2$
controls the leverage component in the same graph norm. The uniform law of
large numbers and the vanishing optimization gap imply
$\mathcal J(\widehat{\mathcal H}_{n,N})\to0$ in probability. Squaring the
coercivity estimate and integrating over $\Pi$ then yields (iii)
\cite{vandervaart1998}.
\end{proof}

Theorem~\ref{thm:consistency} clarifies the role
of the residual system. Under the stated well-posedness and stability
assumptions, zero residual identifies the cleaned calibration triple, and
vanishing empirical residual risk implies convergence of the learned operator.
The residuals therefore couple the output fields during training and provide
a consistency check at inference.

\subsection{Deployment and residual diagnostics}
Deployment uses the predicted fields in two places. They enter the desk's
existing pricing engines and feed the residual checks that determine whether
the result can be used directly or should return to a conventional calibration.

\subsubsection{Integration with an exotic-pricing stack}
The learned operator removes the calibration fixed point, not the payoff solve.
For a European or barrier payoff $\Psi$, a conventional finite-difference
engine solves
\begin{equation}
u_t+\cL_{\ell,\theta}u=0,\qquad u(T,x,y)=\Psi(x,y),             \label{eq:exoticpde}
\end{equation}
with an absorbing/rebate boundary for barriers.  Heston-type LSV produces the
usual two-dimensional PDE with a correlation mixed derivative;
alternating-direction implicit (ADI) splitting, Rannacher smoothing and
monotone interpolation of $\ell$ can remain unchanged.
For an arithmetic running state $A_t=\int_0^t e^{X_s}\dd s$, the augmented
generator is $\cL_{\ell,\theta}+e^x\partial_A$; cliquets and autocallables add
observation-date jump conditions. Finite-difference (FD), ADI or Monte Carlo
(MC) solvers then give the online chain
\[
(Q,\theta)\xrightarrow{\;\mathcal G_\phi\;}(I,a_D,m,\ell)
\xrightarrow{\;\text{FD/ADI or MC}\;}\text{price and Greeks}.
\]
Intraday scenarios use the residual-certified output; other workflows may
warm-start a short particle polish and retain projection and repricing
residuals. The same interface supports hybrid LSV stacks and particle control
variates \cite{cozma2019}, while payoff-engine safeguards remain independent.

\subsubsection{Residual-based diagnostics and governed inference}
Low latency is useful only if a desk can decide when the amortized answer is
safe to consume. Each inference therefore carries the diagnostic vector
\begin{equation}
\mathfrak C(Q,\theta)=
\big(r_{\rm quote},\,\underline g,\,\underline\chi,\,
r_{\rm Dup},\,r_{\rm proj},\,r_{\rm dual}^{\rm audit},\,d_{\rm OOD}\big),  \label{eq:cert}
\end{equation}
where $r_{\rm quote}$ is bid--ask distance, $\underline g=\inf G_q[q]$ is the
butterfly margin, $\underline\chi=\inf(1+Tq_T)$ is the calendar margin, and
$r_{\rm Dup}$ and $r_{\rm proj}$ are normalized residuals. We specify
$d_{\rm OOD}^2=(h-\bar h)^\top(\widehat\Sigma_h+\lambda I)^{-1}(h-\bar h)$
is an out-of-distribution (OOD) distance, with training mean/covariance and
validation-chosen ridge $\lambda$; other scores require fresh validation. The
first two residuals test the marginal chart; the
projection residual tests whether the predicted LSV dynamics can support that
marginal under the selected backbone. The fast path reports these terms;
$r_{\rm dual}^{\rm audit}$ requires the optional backward witness. Teacher seed
dispersion can be appended when particle views are available.

This vector defines three governed modes. \emph{Accept} uses the neural
triple when positivity margins and residual thresholds pass.
\emph{Polish} initializes a short particle solve from $\widehat\ell$ when the
chart is admissible but the projection residual is elevated. \emph{Reject}
returns no certified LSV calibration when no-arbitrage fails,
conditional-moment support is too small, or a polish cannot reduce the
residual. Thresholds are set on a validation distribution and tightened for
contractual valuation. This matters in high-vol-of-vol regimes where a fixed
point may be difficult to certify. For a
validation error $E$ and acceptance region $A_\tau$ induced by thresholds
$\tau$, deployment calibration must report
\begin{equation}
 R(\tau)=\E[E\mid\mathfrak C\in A_\tau],\qquad
 C(\tau)=\Pr(\mathfrak C\in A_\tau),                         \label{eq:selective}
\end{equation}
the selective risk and coverage, with thresholds set on validation data and
frozen for test \cite{angelopoulos2021}. The experiments do not estimate
\eqref{eq:selective}, so $\mathfrak C$ is not a calibrated confidence
certificate; its 0.59 particle-residual correlation is only preliminary ranking
evidence. Calibrated thresholds could minimize
$t_{\rm op}+[1-C(\tau)]t_{\rm polish}$ subject to $R(\tau)\le\varepsilon$;
current data do not support this risk--compute claim.

\begin{figure}[t]
 \centering
 \includegraphics[width=0.80\textwidth]{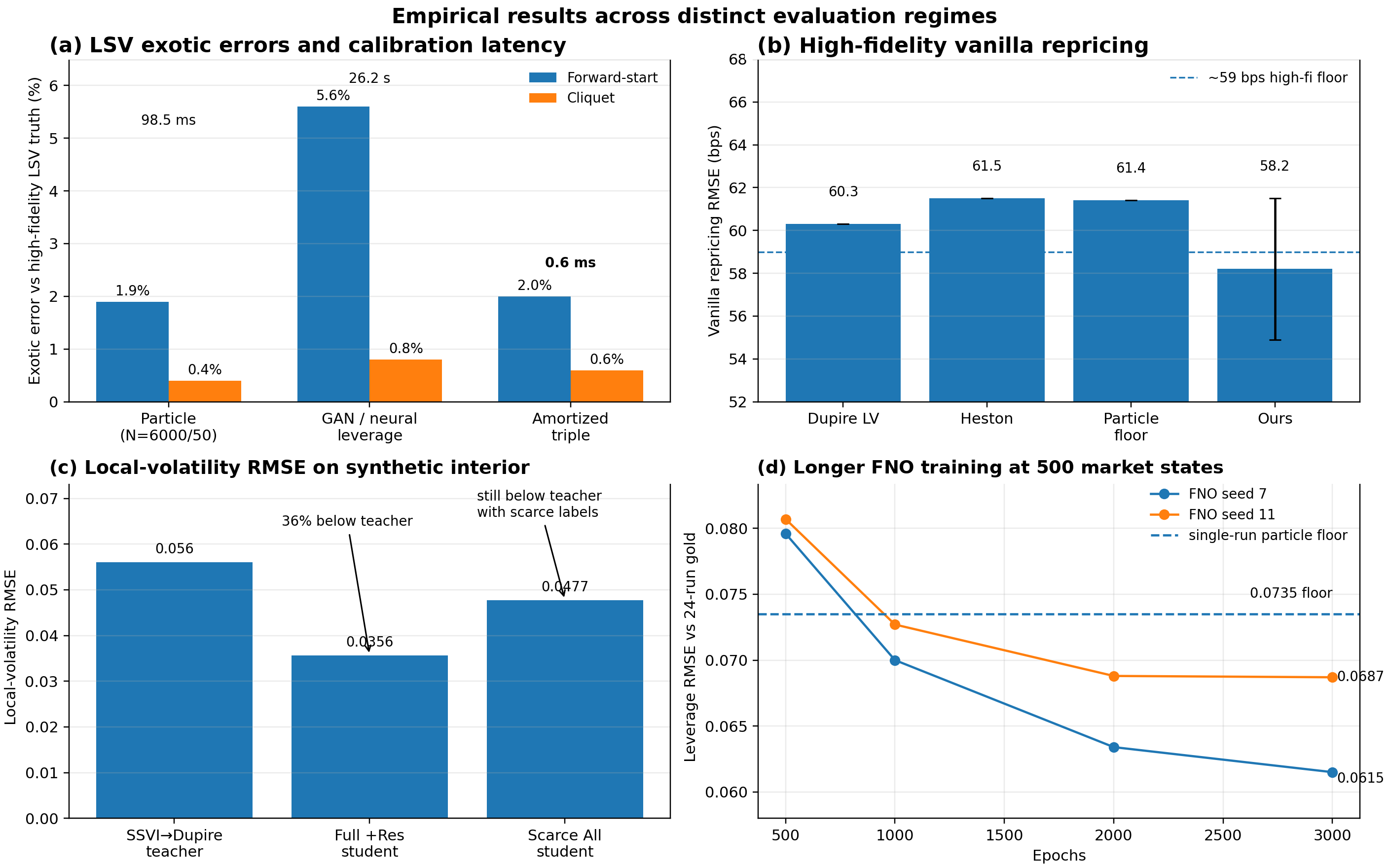}
 \caption{Empirical results from distinct regimes. (a) Exotic-error point
 estimates differ from the particle implementation by 0.1/0.2 percentage
 points; intervals are unavailable. (b) High-fidelity vanilla repricing reaches
 the measured numerical floor. (c) The local-volatility head has 36\% lower
 RMSE than this one-surface SSVI--Dupire estimator on the synthetic interior.
 (d) At
 500 states, two FNO seeds fall below the single-run particle deviation
 benchmark against a 24-run mean.}
 \label{fig:main}
\end{figure}

\begin{figure}[t]
 \centering
 \includegraphics[width=0.75\textwidth]{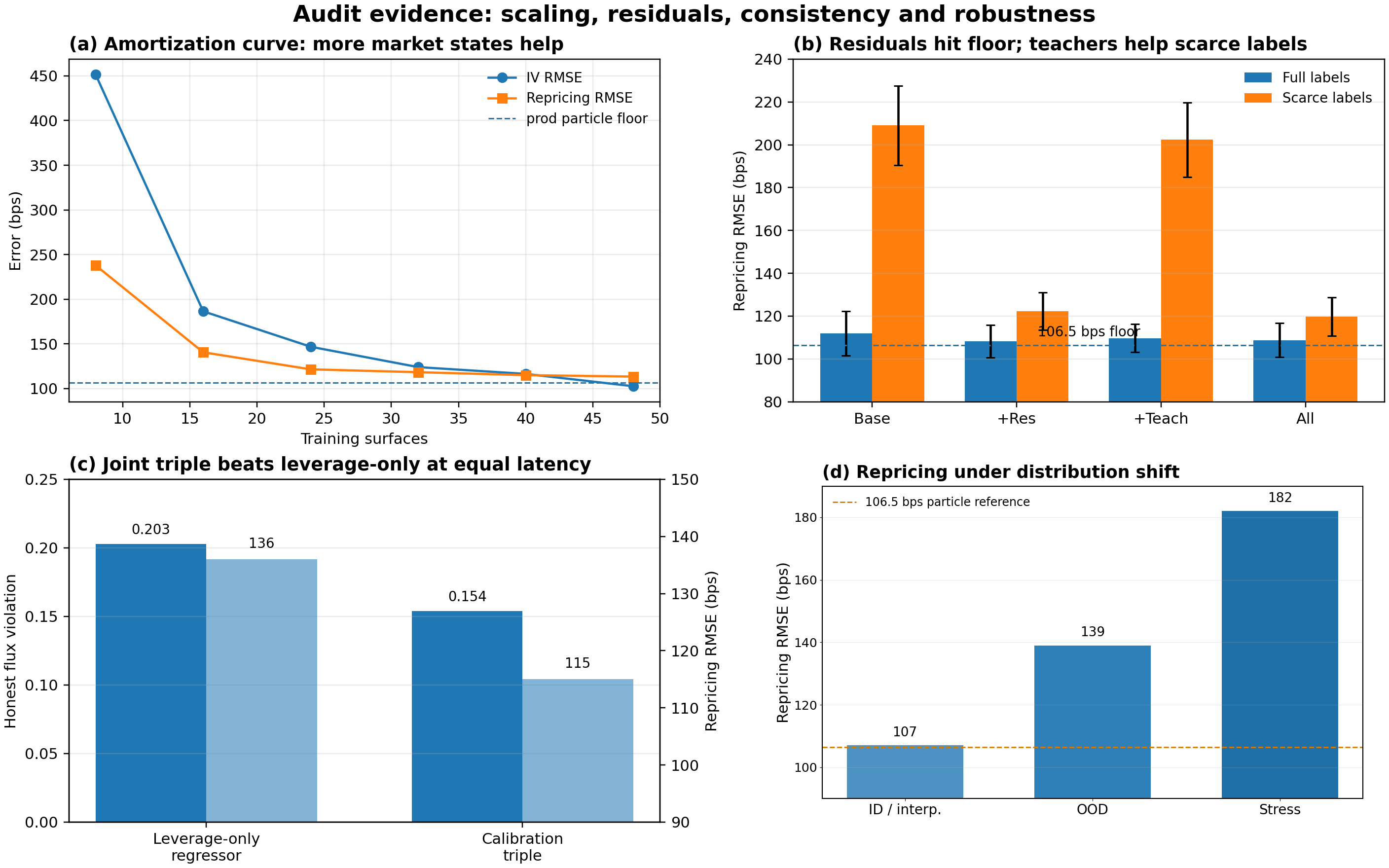}
 \caption{Audit evidence and operational interpretation. (a) Errors decrease
 with market states. (b) Residuals stabilize full-label fits while teachers
 rescue scarce-label training. (c) Joint learning improves independent flux
 and repricing. (d) Repricing under interpolation, OOD and stress shifts on one
 common basis-point axis.
 Error bars are across the reported seeds.}
 \label{fig:audits}
\end{figure}

\section{Experiments}
\subsection{Design and verified numerics}
The main suite uses 48 training and 12 held-out rough-Bergomi market states
\cite{bayer2016}, a Heston-type application backbone \cite{heston1993}, and
quote grids of six maturities by eleven log-moneyness nodes. Metrics are
restricted to $|k|\le0.25$, $T\ge0.2$; wings and shorter maturities are flagged
as extrapolation. Vanilla errors are IV RMSE in basis points; local-volatility
and leverage errors are absolute RMSE in volatility units. We compare
SSVI--Dupire and Heston baselines, a particle fixed point, a per-surface
generative adversarial network (GAN)/neural-leverage method
\cite{cuchiero2020}, a leverage-only network, and ablations without residuals
or teachers.

We verified each analytic and numerical link before fitting. The tests obtained
Black inversion error $3.5\times10^{-16}$, flat-Black Dupire error at machine
precision, convergence between price- and IV-coordinate Dupire from 2.02\% to
0.043\%, rough-Bergomi Volterra covariance error $10^{-16}$, the
constant-backbone identity $\ell=\sqrt{a_D/v_0}$, and Heston
Lewis-versus-Monte-Carlo agreement of 7.6 bps. Projection is scored against
independent targets,
\begin{equation}
 \operatorname{Flux}(\widehat\ell)=
 \frac{\norm{\widehat\ell^{\,2}m^\star-a_D^\star}}{\norm{a_D^\star}},              \label{eq:flux}
\end{equation}
not against the model's own $\widehat m$ and $\widehat a_D$. This avoids an
endogenous score that can remain small when all three predicted fields drift
together.

\subsubsection{Evaluation hierarchy and claim discipline}
Pooling the four evidence regimes would overstate precision.
Table~\ref{tab:protocols} therefore separates the five-seed ablation suite, the
high-fidelity vanilla study, the 500-state noise-averaging experiment and the
common-reference LSV exotic audit by the question each can answer.

\begin{center}
\begin{minipage}{\columnwidth}
\refstepcounter{table}\label{tab:protocols}
\small\textbf{Table \thetable: Evidence regimes have distinct budgets and
inferential roles.}\par
\footnotesize
\setlength{\tabcolsep}{2.0pt}
\begin{tabularx}{\columnwidth}{@{}>{\raggedright\arraybackslash}p{0.16\columnwidth}
>{\raggedright\arraybackslash}p{0.41\columnwidth}
>{\raggedright\arraybackslash}X@{}}
\toprule
Regime & Evidence budget & Supported question\\
\midrule
Main & 48/12; 5 seeds; reference particle & ablation, arbitrage, scaling, OOD\\
High fidelity & 28/8; 2 seeds; $N=40{,}000$/120 & vanilla numerical floor\\
Noise mean & 500 states; 2 FNO seeds; 24 runs & field denoising\\
Exotics & held out; truth $N=60{,}000$/150 & exotic point errors\\
Online & particle $N=6{,}000$/50; GAN 150/4,000 & accuracy--latency frontier\\
\bottomrule
\end{tabularx}
\end{minipage}
\end{center}

Splits are by market state, not $(T,k)$ node, and the trusted interior is fixed
before comparison. Timing excludes shared quote ingestion, includes the
particle method's 0.53 ms Dupire input, and reports batching regimes
separately. The common exotic reference isolates LSV dynamics from vanilla fit.

Uncertainty claims follow the available replication. The
$58.2\pm3.3$ bps result is a descriptive spread across two training seeds, not a
confidence interval, and is \emph{floor-level} rather than superior to the
roughly 59 bps numerical floor. Likewise, the two-seed FNO result is evidence
of noise averaging rather than a population statement, and \emph{zero
violations} means none observed on the evaluated grids and seeds.

\subsection{Accuracy, noise averaging and latency}
Table~\ref{tab:score} and Fig.~\ref{fig:main} collect the principal results.
Against high-fidelity LSV truth ($N=60{,}000$, 150 steps), forward-start and
cliquet point errors differ from the reference particle implementation by
0.1 and 0.2 percentage points; without Monte-Carlo intervals, this is not an
equivalence claim. The reported 150-iteration GAN is slower and less accurate
in this comparison. With $N=40{,}000$, 120-step labels and 28/8 train/test
surfaces, vanilla repricing is $58.2\pm3.3$ bps, versus 60.3 for Dupire, 61.5
for Heston and a 59 bps particle self-repricing floor. The 1.8 bps difference
from that floor is not statistically meaningful.

\begin{center}
\begin{minipage}{\columnwidth}
\centering
\refstepcounter{table}
\label{tab:score}
{Table \thetable: Held-out scorecard (lower is better). The
comparison method is row-specific; the leverage reference is a 24-run
mean.}
\par
\small
\setlength{\tabcolsep}{2.0pt}
\begin{tabular}{@{}lrr@{}}
\toprule
Metric & Competitor & Triple\\
\midrule
Vanilla RMSE (bps) & 60.3 local vol. / 61.5 Heston & $58.2\pm3.3$\\
IV RMSE (bps) & 227.6 SSVI & 41.5\\
Local-volatility RMSE & 0.056 teacher & 0.0356\\
Fwd-start / cliquet error & 1.9\% / 0.4\% particle & 2.0\% / 0.6\%\\
Leverage RMSE & 0.0735 single run & 0.0615--0.0687\\
Calibration time (ms) & 98.5 particle & 0.6\\
Independent flux \eqref{eq:flux} & 0.203 leverage-only & 0.154\\
\bottomrule
\end{tabular}
\end{minipage}
\end{center}

On the rough-Bergomi interior, the IV head gives 41.5 bps error versus 227.6
for SSVI, and local-volatility RMSE falls from 0.056 for SSVI--Dupire to 0.0356
(36\%). The target is the exact Dupire identity; the benchmark is a noisy
finite-data estimator. At 500 states, two FNO seeds give leverage RMSE
0.0615/0.0687 against a 24-run mean, 7--16\% below the 0.0735 single-run
particle deviation. This is noise-averaging evidence, not a population result.
From 8 to 48 training states, IV error decreases
$451\to186\to147\to124\to116\to103$ bps and repricing $238\to113$ bps.

At equal 1.02 ms latency, joint learning improves leverage RMSE, independent
flux and repricing from 0.127/0.203/136 bps to 0.098/0.154/115 bps. With scarce
labels, teachers prevent collapse and give local-volatility/leverage RMSE
0.0477/0.1501; with full labels, teacher-heavy variants inherit about 233 bps
of SSVI bias. Residual training in the latter regime gives IV
$102.7\pm0.6$ bps and repricing $108.3\pm7.6$ bps. The 10.5\%
SSVI-versus-market local-volatility commutator is consistent with
Proposition~2.

The paired 0.60/98.5 ms operator/particle timings use the same hardware and
standard configuration. Table~\ref{tab:latency} separates them from the matched
ablation (1.02 ms) and 984-node CPU checkpoints; particle totals include the
0.53 ms Dupire step. A shared 512 ms two-dimensional ADI exotic solve
reduces the complete-workflow speedup to 1.15--2.53$\times$. Writing total costs as
$C_{\rm op}(M)=T_{\rm train}^{\rm op}+Mt_{\rm op}$ and
$C_{\rm base}(M)=T_{\rm setup}^{\rm base}+Mt_{\rm base}$, a favourable crossing
exists only when $t_{\rm base}>t_{\rm op}$, after
$\lceil(T_{\rm train}^{\rm op}-T_{\rm setup}^{\rm base})_+/
(t_{\rm base}-t_{\rm op})\rceil$ surfaces. Matched offline costs were not
recorded, so this is an online frontier, not a total-cost dominance claim.

\begin{table}[t]
\caption{Timing registry. Paired E timings use the same hardware and standard
configuration; other tags are distinct checkpoints. L uses 984 nodes.}
\label{tab:latency}
\centering
\footnotesize
\setlength{\tabcolsep}{1.5pt}
\begin{tabular}{@{}cllr@{}}
\toprule
Tag & Component & Setting & ms/surface\\
\midrule
E & triple / particle & exotic / $N=6{,}000$, 50 & 0.60 / 98.5\\
A & triple / lev.-only & matched & 1.02 / 1.02\\
L & Dupire FD & 984 nodes & 0.53\\
L & particle & $2^{10}/2^{12}/2^{13}$; 64 steps & 77.93/385.43/783.93\\
L & operator & batch 1/16/64 & 2.37/0.19/0.074\\
P & 2D ADI exotic & $96\times48\times128$ & 512\\
\bottomrule
\end{tabular}
\end{table}

\subsection{Robustness and limitations}
Across five seeds, residual training produced no observed calendar or butterfly
violations on the evaluated grids, versus 2.9\% butterfly violations without
residuals. Repricing error is 107/139/182 bps on interpolation/OOD/stress boxes,
and the projection residual correlates 0.59 with the particle residual in the
volatility-of-volatility sweep. The log-variance head reduces IV error from
136.4 to 102.7 bps (about 25\%); without architecture changes, a two-factor
lognormal Ornstein--Uhlenbeck variance backbone reaches 102--104 bps against an
85 bps numerical floor.

The evidence remains prototype and predominantly synthetic: the principal
suite has 12 test surfaces, the high-fidelity suite eight with two seeds, and
the 500-state FNO study two seeds; exotic estimates retain Monte-Carlo
uncertainty. Boundary/short-maturity error is 175 versus 41.5 bps in the trusted
interior. Pure local volatility may also win a low-budget vanilla metric, which
does not test full LSV dynamics. Production validation requires rolling
historical books, more seeds and states, exotic confidence intervals, hardware
latency percentiles and targeted high-vol-of-vol failures. The present results
therefore support the mechanism and measured ordering, not universal dominance.

\FloatBarrier
These trends have not been tested on historical books. With broader data,
errors should continue to fall as independent market states are added,
residual scores should retain their ranking of particle-polish difficulty, and
the joint-head and teacher-weight effects should persist. If they do not, the
corresponding empirical claims should be narrowed.

\section{Conclusion}
We replace the sequential LSV calibration loop with a projection-consistent
operator that predicts implied volatility, Dupire local variance, the
conditional moment and leverage from finite quotes and an SV backbone. On
synthetic data, the operator gives exotic point errors close to one particle
implementation, improves on two field-estimation benchmarks and reduces
calibration latency to below one millisecond. Future work will test the method
on rolling historical quote books, quantify uncertainty in exotic prices and
compare end-to-end costs under matched settings.

\bibliographystyle{plain}
\bibliography{references}

\end{document}